\begin{document}

\bibliographystyle{apsrev4-2}

\title{Optimal Scheduling of Graph States via Path Decompositions}

\author{Samuel J. Elman}
\email{samuel.elman@uts.edu.au}
\affiliation{Centre for Quantum Software and Information, School of Computer Science, Faculty of Engineering \& Information Technology, University of Technology Sydney, NSW 2007, Australia}

\author{Jason Gavriel}
\email{jason@gavriel.au}
\affiliation{Centre for Quantum Software and Information, School of Computer Science, Faculty of Engineering \& Information Technology, University of Technology Sydney, NSW 2007, Australia}
\affiliation{Centre for Quantum Computation and Communication Technology}

\author{Ryan L. Mann}
\email{mail@ryanmann.org}
\homepage{http://www.ryanmann.org}
\affiliation{Centre for Quantum Software and Information, School of Computer Science, Faculty of Engineering \& Information Technology, University of Technology Sydney, NSW 2007, Australia}
\affiliation{Centre for Quantum Computation and Communication Technology}

\begin{abstract}
    We study the optimal scheduling of graph states in measurement-based quantum computation, establishing an equivalence between measurement schedules and path decompositions of graphs. We define the spatial cost of a measurement schedule based on the number of simultaneously active qubits and prove that an optimal measurement schedule corresponds to a path decomposition of minimal width. Our analysis shows that approximating the spatial cost of a graph is $\textsf{NP}$-hard, while for graphs with bounded spatial cost, we establish an efficient algorithm for computing an optimal measurement schedule.
\end{abstract}

\maketitle

\section{Introduction}
\label{section:Introduction}

Measurement-based quantum computation is an approach to quantum computation where adaptive measurements are performed on an initially prepared entangled resource state~\cite{raussendorf2001one, hein2006entanglement}. In this paper, we study the scheduling of measurement-based quantum computation on a class of resource states known as graph states. Specifically, we establish an equivalence between measurement schedules and path decompositions of graphs.

Previous work has studied the optimisation of measurement-based quantum computation by designing graph states specific to the computation~\cite{ferguson2021measurement, vijayan2024compilation, sunami2022graphix, marqversen2023applications, kaldenbach2023mapping, schroeder2023deterministic, ruh2024quantum}. The choice of graph state has a natural associated cost in terms of the number of qubits and entangling gates. However, the entanglement structure of graph states implies that the entire state may not need to be prepared simultaneously~\cite{raussendorf2003measurement}. Consequently, we consider the spatial cost of a given graph state, based on the number of simultaneously active qubits. Specifically, we consider optimisation only in the scheduling of measurements for a fixed graph state.

Our results imply that the spatial cost of a measurement-based quantum computation scales with the pathwidth of the graph. Further, our analysis implies that approximating the spatial cost of a graph is \textsf{NP}-hard in general. For graphs with bounded spatial cost, we establish an efficient algorithm for computing an optimal measurement schedule.

We explore the implications of our results for implementations of fault-tolerant quantum computation. We argue that a low-degree graph, facilitating only nearest neighbour interactions such as the square lattice, is a suitable choice for reducing spatial resources.

This paper is structured as follows. In Section~\ref{section:Framework}, we introduce the necessary framework for our work. Then, in Section~\ref{section:OptimalScheduling}, we prove our main result, which establishes an equivalence between measurement schedules and path decompositions of graphs. Then, in Section~\ref{section:ComputationalComplexity}, we explore the computational complexity of the spatial cost and optimal measurement schedules. In Section~\ref{section:Implementations}, we explore the implications of our results for implementations of fault-tolerant quantum computing. Finally, we conclude in Section~\ref{section:ConclusionAndOutlook} with some remarks and open problems.

\section{Framework}
\label{section:Framework}

In measurement-based quantum computation, a register of qubits is initially prepared in an entangled resource state and quantum operations are executed by performing measurements on individual qubits within this state~\cite{raussendorf2001one, hein2006entanglement}. The focus of this work is on graph states, a type of resource state, which are represented by graphs with vertices corresponding to qubits and the edges denoting entanglement between them. Formally, for a graph $G=(V, E)$, the graph state $\ket{G}$ is defined by
\begin{equation}
    \ket{G} \coloneqq \left(\prod_{e \in E}CZ_e\right)\ket{+}^{\otimes V}. \notag
\end{equation}

Measurements of individual qubits project the graph state into an eigenstate of the measurement observable. When these measurement observables are Pauli matrices, measurements are equivalent to performing Clifford operations. Alternatively, an appropriate choice of measurement observables allows for arbitrary single-qubit operations to be performed. The inherent randomness of quantum measurement necessitates an adaptive approach, meaning the measurement operations are decided based on the outcomes of previous measurements.

A \emph{measurement sequence} is an ordering of the qubits, denoting the sequence in which they are measured. The entanglement structure of graph states implies that the entire state may not need to be prepared simultaneously~\cite{raussendorf2003measurement}. Consequently, qubits can be sequentially initialised and entangled, ensuring only those necessary for the subsequent measurement are prepared. After a qubit has been measured, its physical manifestation can be reinitialised to represent a different vertex in the graph state. We consider a \emph{measurement schedule} that encompasses the initialisation and measurement of qubits within the graph state.
\begin{definition}[Measurement schedule]
    \label{definition:measurementschedule}
    A measurement schedule on a graph state $\ket{G}$ is a sequence of initialising and measuring qubits, such that:
    \begin{enumerate}
        \item[M1.] A qubit is initialised exactly once.
        \item[M2.] A qubit is measured only after all neighbouring qubits are initialised.
    \end{enumerate}
\end{definition}
We say that a qubit is \emph{active} if it has been initialised but not yet measured. We shall represent a measurement schedule as a sequence of subsets of vertices $(X_i)_{i=1}^n$, with each subset corresponding to the qubits that are active at a given step. The \emph{cost} of a measurement schedule is $\max_{i}\abs{X_i}$. The \emph{spatial cost} $\operatorname{sc}(\ket{G})$ of a graph state $\ket{G}$ is the minimum cost over all possible measurement schedules on $\ket{G}$. We say that a measurement schedule is \emph{optimal} if its cost is equal to the spatial cost. Note that there may be multiple optimal measurement schedules for a given graph state.

Since measurement-based quantum computing is capable of universal quantum computation, any quantum computation can be represented as a set of measurements on a particular graph state. Notably, the graph state enabling a specific quantum computation is not unique. The original concept of measurement-based quantum computation involved the use of a universal cluster state, represented by a graph state structured as a square lattice~\cite{raussendorf2001one}. However, it is possible to modify the graph state to minimise the number of qubits. This minimisation is achieved by designing a graph state that is specific to the computation.

There are several methods for designing these specific graph states. One approach involves initially performing all Pauli measurements, resulting in the formation of an alternative graph structure~\cite{sunami2022graphix}. Note that since these are Pauli measurements, the resulting graph can be precomputed, thereby facilitating the application of a measurement schedule. Another approach for designing these specific graph states involves decomposing the computation in terms of magic-state teleportation~\cite{vijayan2024compilation}.

We now define the concept of a \emph{path decomposition} of a graph.
\begin{definition}[Path decomposition]
    \label{definition:pathdecomposition}
    A path decomposition of a graph $G=(V, E)$ is a sequence $(X_i)_{i=1}^n$ of subsets of $V$, such that:
    \begin{enumerate}
        \item[P1.] For each $v \in V$, there exists an $i$ such that $v \in X_i$.
        \item[P2.] For each $e \in E$, there exists an $i$ such that $e \subseteq X_i$.
        \item[P3.] For all $i \leq j \leq k$, if $v \in X_i \cap X_k$, then $v \in X_j$.
    \end{enumerate}
\end{definition}
The \emph{width} of a path decomposition is $\max_{i}\abs{X_i}-1$. The \emph{pathwidth} $\operatorname{pw}(G)$ of a graph $G$ is the minimum width over all possible path decompositions of $G$~\cite{robertson1983graph}. An example of a path decomposition of a graph is given in Fig.~\ref{figure:pathdecompositionexample}.

\begin{figure}[ht!]
    \centering
    \includegraphics[width=0.40\textwidth]{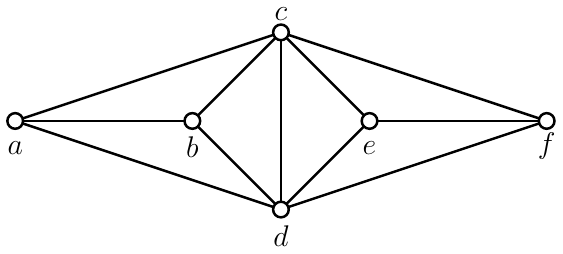}
    \caption{A graph that has a path decomposition given by $(\{a,b,c,d\}, \{c,d,e,f\})$. The width of this decomposition is three.}
    \label{figure:pathdecompositionexample}
\end{figure}

\section{Optimal Scheduling}
\label{section:OptimalScheduling}

In this section, we establish an equivalence between measurement schedules and path decompositions. Consequently, we find that an optimal measurement schedule is determined by a path decomposition of minimal width. Our main result is as follows.
\begin{theorem}
    \label{theorem:PathDecompositionMeasurementScheduleEquivalence}
    Let $G=(V, E)$ be a graph and let $\mathcal{X}=(X_i)_{i=1}^n$ be a sequence of subsets of $V$. The following statements are equivalent:
    \begin{enumerate}
        \item $\mathcal{X}$ is a measurement schedule on $\ket{G}$.
        \item $\mathcal{X}$ is a path decomposition of $G$.
    \end{enumerate}
\end{theorem}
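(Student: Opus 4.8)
The plan is to prove the two implications separately, in each case translating between the operational description of a measurement schedule (a sequence of \emph{initialise}/\emph{measure} operations on qubits) and the combinatorial conditions P1--P3, reading the sequence $\mathcal{X}=(X_i)$ as the succession of active sets.

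For $(1)\Rightarrow(2)$, I would start from a valid measurement schedule and examine its induced sequence of active sets, where $X_i$ is obtained from $X_{i-1}$ by inserting a vertex at an initialisation or deleting one at a measurement. Condition M1 forces the indices $i$ with $v\in X_i$ to form a contiguous block: a vertex becomes active exactly when it is initialised and, never being reinitialised, remains active until it is measured (or until the end). This is precisely P3, and since every vertex is initialised it is active at some step, giving P1. For P2, given an edge $\{u,v\}$, consider whichever of $u,v$ is measured first, say $u$; by M2 every neighbour of $u$, in particular $v$, has already been initialised, and $v$ is measured only later (or never), so $v$ is still active at the step just before $u$ is measured, whence $\{u,v\}\subseteq X_i$ for that $i$. (If neither endpoint is ever measured, both are active at the final step.)

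For $(2)\Rightarrow(1)$, I would build an explicit measurement schedule by sweeping through the bags: first initialise every qubit of $X_1$; then, for each $i$ in turn, \emph{measure} the qubits of $X_i\setminus X_{i+1}$ and afterwards \emph{initialise} those of $X_{i+1}\setminus X_i$; finally measure the qubits of $X_n$. By P1 and P3 each vertex occupies a single block $X_a,\dots,X_b$, so it is initialised exactly once (upon entering at $X_a$) and measured exactly once (upon leaving at $X_b$), establishing M1. For M2, when $v$ is measured at the end of its block $X_b$ and $\{v,w\}\in E$, condition P2 supplies a bag containing both $v$ and $w$; this bag lies inside $v$'s block, so $w$'s block begins no later than $X_b$ and $w$ has already been initialised, as required. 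I would add the remark that performing all measurements before all initialisations within each step keeps the number of simultaneously active qubits equal to $\abs{X_i}$ or $\abs{X_i\cap X_{i+1}}$ at every instant, so the cost of this schedule is $\max_i\abs{X_i}$; combined with the first implication this matches cost with width and yields $\operatorname{sc}(\ket{G})=\operatorname{pw}(G)+1$.

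The subtlety to watch is the mismatch of granularity between the two notions: an operational schedule alters the active set one qubit at a time, while a path decomposition records coarser bags, so the equivalence must be understood at the level of the sequence of active sets, and the interleaving of initialisations and measurements inside a single step must be chosen correctly — measure first, then initialise — both to validate M2 and to keep the active count from exceeding $\max_i\abs{X_i}$. The only other care needed is in the boundary bags $X_1$ and $X_n$ and in vertices that are never measured, all of which are handled cleanly by the explicit sweep.
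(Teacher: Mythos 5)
Your proposal is correct and follows essentially the same route as the paper's proof: both directions are handled by directly matching M1/M2 against P1--P3, with the key observations being identical (for P2, look at the endpoint measured first and invoke M2; for M2, use P2 to place each neighbour's bag inside the vertex's interval). Your version is somewhat more explicit about the operational-to-combinatorial translation and the measure-then-initialise interleaving within a step, and your closing cost remark anticipates the paper's Corollary~\ref{corollary:SpatialCostPathWidthEquivalence}, but the mathematical content is the same.
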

\begin{proof}
    Let $\mathcal{X}=(X_i)_{i=1}^n$ be a measurement schedule. Condition M1 implies that for each $v \in V$, there exists an $i$ such that $v \in X_i$, which is condition P1. Let $\{u,v\} \in E$ be an edge and let $i$ and $j$ be the maximum indices such that $u \in X_i$ and $v \in X_j$. These indices are guaranteed to exist by condition M1 and we assume without loss of generality that $i \leq j$. By condition M2, it follows that $v \in X_i$, and so $\{u,v\} \subseteq X_i$. Therefore, conditions M1 and M2 imply that for each $e \in E$, there exists an $i$ such that $e \subseteq X_i$, which is condition P2. Conditions M1 and M2 imply that for all $i \leq j \leq k$, if $v \in X_i \cap X_k$, then $v \in X_j$, which is condition P3. Hence, $\mathcal{X}$ is a path decomposition.
    
    Now, let $\mathcal{X}=(X_i)_{i=1}^n$ be a path decomposition. Conditions P1 and P3 imply that a qubit is initialised exactly once, which is M1. Let $v \in V$ be a vertex and let $j$ be the maximum index such that $v \in X_j$, which is guaranteed to exist by condition P1. By condition P2, the neighbourhood of $v$ is contained in $\bigcup_{i=1}^jX_i$. Therefore, conditions P1 and P2 imply that a qubit is measured only after all neighbouring qubits are initialised, which is condition M2. Hence, $\mathcal{X}$ is a measurement schedule. This completes the proof.
\end{proof}
We note this result can be considered an application of the \emph{node searching game} from algorithmic graph theory~\cite{kirousis1985interval}. We have the immediate corollary.
\begin{corollary}
    \label{corollary:SpatialCostPathWidthEquivalence}
    Let $G$ be a graph. An optimal measurement schedule on $\ket{G}$ is determined by a path decomposition of $G$ of minimal width, i.e., $\operatorname{sc}(\ket{G})=\operatorname{pw}(G)+1$.
\end{corollary}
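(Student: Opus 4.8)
The plan is to derive this directly from Theorem~\ref{theorem:PathDecompositionMeasurementScheduleEquivalence}, which already identifies the two classes of sequences; all that remains is to track the numerical offset between ``cost'' and ``width''. First I would observe that, by the theorem, a sequence $\mathcal{X}=(X_i)_{i=1}^n$ of subsets of $V$ is a measurement schedule on $\ket{G}$ if and only if it is a path decomposition of $G$; in particular the set of measurement schedules and the set of path decompositions coincide, and both are nonempty (the single-bag sequence $(V)$ witnesses this, so $\operatorname{sc}(\ket{G})$ is well defined).

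Next I would compare the two cost functionals on a common sequence. By Definition~\ref{definition:measurementschedule}'s surrounding text, the cost of $\mathcal{X}$ as a measurement schedule is $\max_i\abs{X_i}$, whereas by Definition~\ref{definition:pathdecomposition}'s surrounding text the width of $\mathcal{X}$ as a path decomposition is $\max_i\abs{X_i}-1$. Hence for every such sequence, $\mathrm{cost}(\mathcal{X}) = \mathrm{width}(\mathcal{X}) + 1$.

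Taking the minimum of both sides over the (identical) collection of all measurement schedules/path decompositions then gives $\operatorname{sc}(\ket{G}) = \min_{\mathcal{X}}\mathrm{cost}(\mathcal{X}) = \min_{\mathcal{X}}\bigl(\mathrm{width}(\mathcal{X})+1\bigr) = \operatorname{pw}(G)+1$. Moreover, since the two objectives differ by the constant $1$, a sequence attains the minimum cost if and only if it attains the minimum width, so an optimal measurement schedule is exactly a minimum-width path decomposition. I do not expect a genuine obstacle here: the content is entirely in the preceding theorem, and the only thing to be careful about is the bookkeeping of the $-1$ in the definition of width versus its absence in the definition of cost, together with noting that the minimisation ranges over the same set in both cases.
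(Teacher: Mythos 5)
Your proposal is correct and matches the paper's intent exactly: the paper presents this as an "immediate corollary" of Theorem~\ref{theorem:PathDecompositionMeasurementScheduleEquivalence}, and your argument — identifying the two classes of sequences via the theorem, noting that cost equals width plus one on any common sequence, and minimising over the identical collection — is precisely the bookkeeping the paper leaves implicit. The observation that the single-bag sequence $(V)$ guarantees nonemptiness is a sensible extra precaution, though not strictly needed.
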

This result shows that the spatial cost of a measurement-based quantum computation scales with the pathwidth of the graph.

\section{Computational Complexity}
\label{section:ComputationalComplexity}

In this section, we explore the computational complexity of the spatial cost and optimal measurement schedules. We first show that approximating the spatial cost is \textsf{NP}-hard. Bodlaender et al.~\cite{bodlaender1995approximating} showed that the problem of approximating the pathwidth up to an additive error is \textsf{NP}-hard. This gives rise to the following corollary.
\begin{corollary}
    Let $G=(V, E)$ be a graph. It is \emph{\textsf{NP}-hard} to approximate $\operatorname{sc}(\ket{G})$ up to an additive error of $\abs{V}^\epsilon$ for $0<\epsilon<1$.
\end{corollary}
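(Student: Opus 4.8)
The plan is to reduce directly from the pathwidth approximation problem using the exact equivalence established in Corollary~\ref{corollary:SpatialCostPathWidthEquivalence}. First I would recall the precise statement of the result of Bodlaender et al.~\cite{bodlaender1995approximating}: there is no polynomial-time algorithm that, given a graph $G=(V,E)$, outputs a value $k$ with $\operatorname{pw}(G) \leq k \leq \operatorname{pw}(G) + \abs{V}^\epsilon$ for any fixed $0 < \epsilon < 1$, unless $\textsf{P}=\textsf{NP}$. The key observation is that by Corollary~\ref{corollary:SpatialCostPathWidthEquivalence}, we have the identity $\operatorname{sc}(\ket{G}) = \operatorname{pw}(G) + 1$, so any approximation of $\operatorname{sc}(\ket{G})$ translates into an approximation of $\operatorname{pw}(G)$ with exactly the same additive error.

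The argument then proceeds by contraposition. Suppose there were a polynomial-time algorithm $\mathcal{A}$ that, on input $G$, outputs a value $s$ satisfying $\operatorname{sc}(\ket{G}) \leq s \leq \operatorname{sc}(\ket{G}) + \abs{V}^\epsilon$. Running $\mathcal{A}$ on $G$ and subtracting $1$ from its output yields a value $s - 1$ with $\operatorname{pw}(G) \leq s - 1 \leq \operatorname{pw}(G) + \abs{V}^\epsilon$, since $\operatorname{sc}(\ket{G}) - 1 = \operatorname{pw}(G)$. This is precisely a polynomial-time additive $\abs{V}^\epsilon$-approximation for pathwidth, contradicting the hardness result of Bodlaender et al. Note that constructing the graph state $\ket{G}$ is purely a conceptual step; the input to the spatial-cost approximation problem is the graph $G$ itself, so there is no representational overhead and the reduction is trivially polynomial-time (indeed, it is essentially the identity map together with a shift by one).

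The only subtlety worth checking is that the relevant range of $\epsilon$ and the form of the additive error match up exactly between the two problems, which they do since the $+1$ shift does not affect the $\abs{V}^\epsilon$ term and $\abs{V}$ is the same for both. I do not anticipate a genuine obstacle here: the entire content of the corollary is packaged into the exact equality $\operatorname{sc}(\ket{G}) = \operatorname{pw}(G) + 1$ from Corollary~\ref{corollary:SpatialCostPathWidthEquivalence}, and the complexity-theoretic reduction is immediate once that identity is in hand. If anything, the ``hard part'' is merely stating the cited hardness result with enough precision that the additive-error regime lines up, but this is bookkeeping rather than mathematics.
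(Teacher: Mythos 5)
Your proposal is correct and follows exactly the route the paper takes: combining the identity $\operatorname{sc}(\ket{G}) = \operatorname{pw}(G) + 1$ from Corollary~\ref{corollary:SpatialCostPathWidthEquivalence} with the additive-error hardness of pathwidth approximation due to Bodlaender et al.~\cite{bodlaender1995approximating}. Your version merely spells out the (trivial) reduction and the $+1$ shift more explicitly than the paper's one-line citation-based proof.
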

\begin{proof}
    The proof follows from Corollary~\ref{corollary:SpatialCostPathWidthEquivalence} and Ref.~\cite[Theorem 23]{bodlaender1995approximating}, which states that it is \textsf{NP}-hard to approximate the pathwidth up to an additive error of $\abs{V}^\epsilon$ for $0<\epsilon<1$.
\end{proof}
While approximating the spatial cost is \textsf{NP}-hard, we shall see there is an fixed-parameter tractable algorithm when parameterised by the spatial cost. This follows from results on the fixed-parameter tractability of the pathwidth~\cite{bodlaender1993linear, furer2016faster}. We obtain the following corollary.
\begin{corollary}
    Let $G=(V, E)$ be a graph. The spatial cost of $\ket{G}$ and a corresponding measurement schedule can be computed in time $\exp[O\left(\operatorname{sc}(\ket{G})^2\right)]\cdot\abs{V}$.
\end{corollary}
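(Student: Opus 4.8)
The plan is to invoke Corollary~\ref{corollary:SpatialCostPathWidthEquivalence} to reduce the problem entirely to the computation of an optimal path decomposition, and then cite the known fixed-parameter tractable algorithms for pathwidth. Concretely, by Corollary~\ref{corollary:SpatialCostPathWidthEquivalence} we have $\operatorname{sc}(\ket{G}) = \operatorname{pw}(G) + 1$, and moreover a path decomposition of $G$ of minimal width directly determines an optimal measurement schedule on $\ket{G}$ (the sequence of subsets is literally the same object). So it suffices to exhibit an algorithm that, given $G$, computes $\operatorname{pw}(G)$ together with a witnessing minimal-width path decomposition in time $\exp\!\left[O\!\left(\operatorname{pw}(G)^2\right)\right] \cdot \abs{V}$, since $\operatorname{pw}(G) = \operatorname{sc}(\ket{G}) - 1$ and the $\exp[O((\operatorname{sc}-1)^2)]$ bound is absorbed into $\exp[O(\operatorname{sc}^2)]$.

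First I would recall Bodlaender's theorem~\cite{bodlaender1993linear}: for every fixed $k$ there is a linear-time algorithm that decides whether $\operatorname{pw}(G) \le k$ and, if so, outputs a path decomposition of width at most $k$; the dependence of the running time on $k$ was subsequently improved, and the relevant bound for us is the $2^{O(k^2)} \cdot \abs{V}$ algorithm of F\"urer~\cite{furer2016faster}. Strictly, these are stated for a fixed target width $k$; to turn this into an algorithm that does not know $\operatorname{pw}(G)$ in advance, I would run the decision-plus-construction procedure for $k = 1, 2, 3, \dots$ and halt at the first $k$ for which it succeeds, which is $k = \operatorname{pw}(G)$. The total time is $\sum_{k=1}^{\operatorname{pw}(G)} 2^{O(k^2)} \cdot \abs{V}$, a geometric-type sum dominated by its last term, hence $2^{O(\operatorname{pw}(G)^2)} \cdot \abs{V} = \exp\!\left[O\!\left(\operatorname{sc}(\ket{G})^2\right)\right] \cdot \abs{V}$. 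Finally, I would feed the minimal-width path decomposition returned by this search into the equivalence of Theorem~\ref{theorem:PathDecompositionMeasurementScheduleEquivalence} to read off the corresponding optimal measurement schedule, incurring only linear overhead.

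There is essentially no mathematical obstacle here; the entire content is a citation plus the bookkeeping of the iterative-deepening wrapper and the observation that the equivalence in Theorem~\ref{theorem:PathDecompositionMeasurementScheduleEquivalence} is constructive and cost-preserving (via Corollary~\ref{corollary:SpatialCostPathWidthEquivalence}). The one point that deserves a sentence of care is confirming that the cited algorithm actually \emph{constructs} a path decomposition rather than merely deciding the pathwidth --- both \cite{bodlaender1993linear} and \cite{furer2016faster} do produce the decomposition, so this is fine --- and that the running-time bound is stated in the form $2^{O(k^2)} \cdot \abs{V}$ with the $\abs{V}$ factor linear rather than, say, quadratic. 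Given that, the corollary follows immediately.
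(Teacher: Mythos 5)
Your proposal is correct and follows essentially the same route as the paper: reduce via Corollary~\ref{corollary:SpatialCostPathWidthEquivalence} to computing an optimal path decomposition, invoke F\"urer's $\exp[O(\operatorname{pw}(G)^2)]\cdot\abs{V}$ algorithm, and read off the measurement schedule through Theorem~\ref{theorem:PathDecompositionMeasurementScheduleEquivalence}. The only difference is that you spell out the iterative-deepening wrapper over $k$, whereas the paper simply cites the result of Ref.~\cite{furer2016faster} in the form that already computes the pathwidth and a witnessing decomposition; this is a harmless elaboration, not a different argument.
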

\begin{proof}
    The proof follows from Theorem~\ref{theorem:PathDecompositionMeasurementScheduleEquivalence}, Corollary~\ref{corollary:SpatialCostPathWidthEquivalence}, and Ref.~\cite[Theorem 2]{furer2016faster}, which establishes an algorithm for computing the pathwidth and a corresponding path decomposition in time $\exp[O\left(\operatorname{pw}(G)^2\right)]\cdot\abs{V}$. 
\end{proof}
This result implies that an optimal measurement schedule can be efficiently computed for graphs with bounded spatial cost. However, as we shall see, there exists an efficient classical algorithm for simulating measurement-based quantum computation in such cases. Markov and Shi~\cite{markov2008simulating} showed that there is an efficient algorithm for simulating a measurement-based quantum computation for graphs with bounded treewidth. We apply their result to obtain the following corollary.
\begin{corollary}
    Let $G=(V, E)$ be a graph. A measurement-based quantum computation on the graph state $\ket{G}$ can be simulated by a randomised algorithm in time $\exp[O\left(\operatorname{sc}(\ket{G})\right)]\cdot\abs{V}^{O(1)}$.
\end{corollary}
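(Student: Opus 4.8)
The plan is to chain together three ingredients: the standard inequality relating treewidth and pathwidth, Corollary~\ref{corollary:SpatialCostPathWidthEquivalence}, and the tensor-network simulation result of Markov and Shi~\cite{markov2008simulating}. First I would recall that every path decomposition is in particular a tree decomposition, so $\operatorname{tw}(G) \leq \operatorname{pw}(G)$ for any graph $G$. Combined with Corollary~\ref{corollary:SpatialCostPathWidthEquivalence}, which gives $\operatorname{pw}(G) = \operatorname{sc}(\ket{G}) - 1$, this yields $\operatorname{tw}(G) \leq \operatorname{sc}(\ket{G}) - 1$. Thus any bound in terms of treewidth immediately transfers to a bound in terms of spatial cost.

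Next I would set up the measurement-based computation on $\ket{G}$ as a tensor network contraction: place a $\ket{+}$ tensor at each vertex, a $CZ$ tensor on each edge, and the appropriate measurement (projection) tensor at each vertex according to the chosen observable, processing the measurements in the order given by the measurement schedule and conditioning later measurement tensors on the outcomes already observed. The underlying graph of this network is $G$ with a bounded number of extra degree-one or degree-two tensors attached per vertex, so its treewidth (equivalently, the treewidth of the relevant incidence/line graph appearing in the Markov--Shi framework) exceeds $\operatorname{tw}(G)$ by at most a small additive constant, and hence is $O(\operatorname{sc}(\ket{G}))$.

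I would then invoke Markov and Shi's theorem: a tensor network whose underlying graph has treewidth $T$ can be contracted in time $\exp[O(T)] \cdot \abs{V}^{O(1)}$, and their strand-based sampling argument upgrades exact amplitude evaluation to a randomised algorithm that samples measurement outcomes according to the Born rule. Substituting $T = O(\operatorname{sc}(\ket{G}))$ gives the claimed running time $\exp[O\left(\operatorname{sc}(\ket{G})\right)] \cdot \abs{V}^{O(1)}$.

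The main obstacle I anticipate is not the pathwidth-to-treewidth step, which is routine, but rather the careful bookkeeping of exactly which graph's treewidth controls the Markov--Shi contraction bound and confirming it is within an additive constant of $\operatorname{tw}(G)$ after adding the $\ket{+}$ and measurement tensors, together with checking that the adaptivity of measurement-based computation does not inflate the cost. The latter point is benign because each adaptive update only replaces a local measurement tensor by another of the same shape, so a single contraction ordering of bounded width can be reused across all branches; nonetheless this should be stated explicitly rather than left implicit.
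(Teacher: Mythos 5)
Your proposal is correct and follows essentially the same route as the paper: invoke Markov and Shi's simulation result for bounded-treewidth graph states, then transfer the bound to spatial cost via $\operatorname{tw}(G) \leq \operatorname{pw}(G) = \operatorname{sc}(\ket{G}) - 1$ using Corollary~\ref{corollary:SpatialCostPathWidthEquivalence}. The paper simply cites Ref.~\cite[Theorem 6.2]{markov2008simulating} as a black box for the measurement-based simulation, whereas you re-derive its tensor-network internals; that extra bookkeeping is already handled in the cited theorem and is not needed here.
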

\begin{proof}
    Ref.~\cite[Theorem 6.2]{markov2008simulating} establishes a randomised algorithm for simulating a measurement-based quantum computation on a graph state $\ket{G}$ in time $\exp[O\left(\operatorname{tw}(G)\right)]\cdot\abs{V}^{O(1)}$, where $\operatorname{tw}(G)$ denotes the treewidth of $G$. The proof then follows from Corollary~\ref{corollary:SpatialCostPathWidthEquivalence} and the fact that the treewidth is bounded from above by the pathwidth.
\end{proof}

\section{Implementations}
\label{section:Implementations}

We now explore the implications of our results for implementations of fault-tolerant quantum computing. Markov and Shi~\cite{markov2008simulating} showed that a high treewidth is a necessary condition for a quantum computation to be hard to simulate classically. It follows from Corollary~\ref{corollary:SpatialCostPathWidthEquivalence} that the spatial cost is at least the treewidth plus one. This suggests that the most efficient use of a quantum computer may be realised with graph states in which these two values are equal. Examples of such graphs include the complete graph $K_n$ on $n$ vertices, for which $\operatorname{sc}(\ket{K_n})=\operatorname{tw}(K_n)+1=n$, and the $m \times n$ square lattice $\boxplus_{m,n}$, for which $\operatorname{sc}(\ket{\boxplus_{m,n}})=\operatorname{tw}(\boxplus_{m,n})+1=\min(m,n)+1$.

Quantum devices are constrained by the limitations of their implementation. Therefore, identifying the optimal structure requires considering a specific implementation. Scalable quantum information processing devices are likely to require a significant level of active error correction. By encoding logical information in many physical qubits with an error-correcting code, it is possible to correct errors through measurement~\cite{terhal2015quantum, campbell2017roads}. The most suitable codes are those that are compatible with the constraints of physical devices. The surface code is a prominent choice for such a code, due to its favourable threshold-to-resource ratio and low-weight stabilisers, which make it experimentally feasible~\cite{kitaev2003fault, fowler2012surface}. Although the surface code does not support fault-tolerant application of non-Clifford operations, these operations can be facilitated by preparing magic states in separate magic-state factories and subsequently teleporting them into the qubit register.

We consider an architecture where physical qubits are arranged in a regular two-dimensional lattice. These physical qubits are partitioned into logical surface code patches, quantum bus channels, and magic-state factories~\cite{divincenzo2000physical, litinski2019game}. In the context of measurement-based quantum computation, logical qubits encoded into surface code patches are assigned to qubits of the graph state. The entangling operations used to prepare the graph state are performed using lattice surgery~\cite{herr2017lattice, yoder2017surface, erhard2021entangling}. Measurements involve initially applying logical single-qubit rotations on the surface code patches, followed by Pauli measurements. It is also possible to initialise a logical surface code patch in a magic state prior to entangling the qubit into the graph state.

Graph states with higher-degree vertices necessitate more entangling gates being applied to the same logical qubit. Additionally, non-nearest neighbour entangling gates require quantum bus channels, resulting in an additional space requirement. Note that graphs of degree higher than four also require quantum bus channels. Therefore, a low-degree graph that facilitates only nearest neighbour interactions, such as the square lattice, is a suitable choice for reducing spatial resources.

\begin{figure}[ht!]
    \centering
    \includegraphics[width=0.40\textwidth]{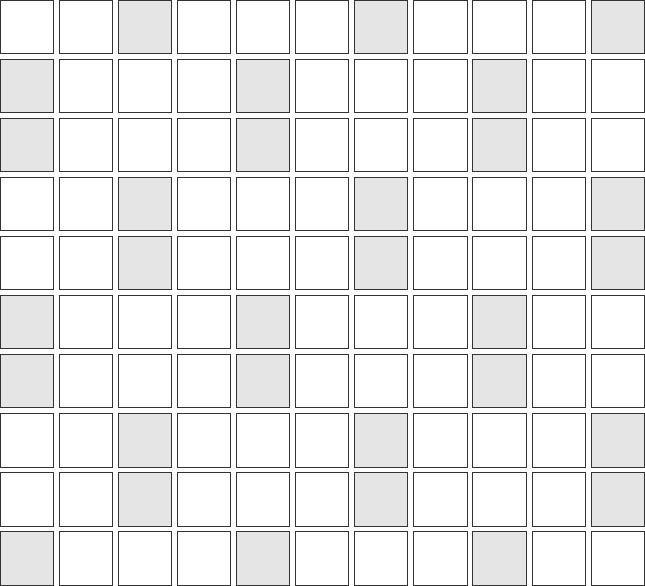}
    \caption{An implementation of the square lattice graph state. The white squares represent logical qubits and the shaded squares represent magic-state factories.}
    \label{figure:squarelatticeimplementation}
\end{figure}

The implementation of the square lattice graph state with a specified measurement schedule allows for the assignment of physical qubits in a two-dimensional array to logical qubits. This approach eliminates the need for quantum bus channels and ensures that each logical qubit is adjacent to a magic-state factory (see Fig.~\ref{figure:squarelatticeimplementation}), thereby optimising the ratio of active logical to physical qubits. Additionally, employing transversal injection methods~\cite{gavriel2023transversal, gidney2023cleaner} for magic-state preparation may further reduce spatial resources compared to distillation methods~\cite{litinski2019magic}.

\section{Conclusion \& Outlook}
\label{section:ConclusionAndOutlook}

We have established an equivalence between the scheduling of graph states in measurement-based quantum computation and path decompositions of graphs. Consequently, we have shown that an optimal measurement schedule is given by a path decomposition of minimal width. Further, we have shown that approximating the spatial cost of a graph is \textsf{NP}-hard, while for graphs with bounded spatial cost, we established an efficient algorithm for computing an optimal measurement schedule. Finally, we discussed the implications of our results for implementations of fault-tolerant quantum computing.

It would be interesting to compare the spatial cost of implementing measurement-based quantum computing in the fault-tolerant setting discussed in Section~\ref{section:Implementations} with an inbuilt error-correcting scheme such as in Refs.~\cite{raussendorf2006fault, raussendorf2007fault}. It would also be interesting to explore the time efficiency of measurement-based quantum computation. While a square lattice may be less temporally efficient than a specifically designed graph state, this does not take into account the costs associated with transporting logical qubits and performing error correction.

\section*{Acknowledgements}

We thank Michael Bremner, Simon Devitt, Salini Karuvade, Thinh Le, Luke Mathieson, and Jannis Ruh for helpful discussions. SJE was supported with funding from the Defense Advanced Research Projects Agency under the Quantum Benchmarking (QB) program under award no. HR00112230007, HR001121S0026, and HR001122C0074 contracts. JG and RLM were supported by the ARC Centre of Excellence for Quantum Computation and Communication Technology (CQC2T), project number CE170100012. The views, opinions and/or findings expressed are those of the authors and should not be interpreted as representing the official views or policies of the Department of Defense or the U.S. Government.

\bibliography{bibliography}

\end{document}